\def\BibTeX{{\rm B\kern-.05em{\sc i\kern-.025em b}\kern-.08em
    T\kern-.1667em\lower.7ex\hbox{E}\kern-.125emX}}
\newtheorem{remark}{Remark}
\newtheorem{proposition}{Proposition}
\newtheorem{property}{Property}
\def\prob {$\mathsf{PROB}$}
\def\inc {\ensuremath{\mathsf{INC}}}
\def\incap {\ensuremath{\mathsf{INC_{AP}}}}
\def\smooth {\ensuremath{\mathsf{Smooth}}}
\newcommand{\lo}[1]{\mathsf{Lo}(#1)}
\newcommand{\hi}[1]{\mathsf{Hi}(#1)}
\newcommand{\child}[1]{\mathsf{Child}(#1)}
\newcommand{\parent}[1]{\mathsf{Parent}(#1)}
\newcommand{\varset}[1]{\mathsf{VarSet}(#1)}
\newcommand{\var}[1]{\mathsf{Var}(#1)}
\newcommand{\subdiagram}[1]{\mathsf{Subdiagram}(#1)}
\begin{document}

\title{INC: A Scalable Incremental Weighted Sampler}

\author{
\IEEEauthorblockN{Suwei Yang}
\IEEEauthorblockA{
\textit{National University of Singapore}\\
Singapore
}
\and
\IEEEauthorblockN{Victor C. Liang}
\IEEEauthorblockA{\textit{GrabTaxi Holdings Pte. Ltd.} \\
Singapore
}
\and
\IEEEauthorblockN{Kuldeep S. Meel}
\IEEEauthorblockA{
\textit{National University of Singapore}\\
Singapore
}
}

\maketitle

\begin{abstract}
    The fundamental problem of weighted sampling involves sampling of satisfying assignments of Boolean formulas, which specify sampling sets, and according to distributions defined by pre-specified weight functions to weight functions. The tight integration of sampling routines in various applications has highlighted the need for samplers to be incremental, i.e., samplers are expected to handle updates to weight functions. 

    The primary contribution of this work is an efficient knowledge compilation-based weighted sampler, {\inc}\footnote{code available at \url{https://github.com/grab/inc-weighted-sampler/}}, designed for incremental sampling. {\inc} builds on top of the recently proposed knowledge compilation language, OBDD[$\wedge$], and is accompanied by rigorous theoretical guarantees. Our extensive experiments demonstrate that {\inc} is faster than state-of-the-art approach for majority of the evaluation. In particular, we observed a median of $1.69\times$ runtime improvement over the prior state-of-the-art approach.
\end{abstract}

\begin{IEEEkeywords}
knowledge compilation, sampling, weighted sampling
\end{IEEEkeywords}

\section{Introduction} \label{sec:introduction}

Given a Boolean formula $F$ and weight function $W$, weighted sampling involves sampling from the set of satisfying assignments of $F$ according to the distribution defined by $W$. Weighted sampling is a fundamental problem in many fields such as computer science, mathematics and physics, with numerous applications. In particular, constrained-random simulation forms the bedrock of modern hardware and software verification efforts~\cite{KK07}.

Sampling techniques are fundamental building blocks, and there has been sustained interest in the development of sampling tools and techniques. Recent years witnessed the introduction of numerous sampling tools and techniques, from approximate sampling techniques to uniform samplers SPUR and KUS, and weighted sampler WAPS~\cite{JS1996,SSL15,AHT18,SGRM18,GSRM19}. Sampling tools and techniques have seen continuous adoption in many applications and settings~\cite{NRJKVMS07,GPMXWOCB14,KW19,BLM20,PLVSMMVKG20,BCMS21}. The scalability of a sampler is a consideration that directly affects its adoption rate. Therefore, improving scalability continues to be a key objective for the community focused on developing samplers.

The tight integration of sampling routines in various applications has highlighted the importance for samplers to handle incremental weight updates over multiple sampling rounds, also known as incremental weighted sampling. Existing efforts on improving scalability typically focus on single round weighted sampling, and might have overlooked the incremental setting. In particular, existing approaches involving incremental weighted sampling typically employ off-the-shelf weighted samplers which could lead to less than ideal incremental sampling performance.

The primary contribution of this work is an efficient scalable weighted sampler {\inc} that is designed from the ground up to address scalability issues in incremental weighted sampling settings. The core architecture of {\inc} is based on knowledge compilation (KC) paradigm, which seeks to succinctly represent all satisfying assignments of a Boolean formula with a directed acyclic graph (DAG)~\cite{DM02}. In the design of {\inc}, we make two core decisions that are responsible for outperforming the current state-of-the-art weighted sampler. Firstly, we build {\inc} on top of {\prob} (Probabilistic OBDD[$\land$]~\cite{LLY17}) which is substantially smaller than the KC diagram used in the prior state-of-the-art approaches. Secondly, {\inc} is designed to perform \textit{annotation}, which refers to the computation of joint probabilities, in log-space to avoid the slower alternative of using arbitrary precision math computations.

Given a Boolean formula $F$ and weight function $W$, {\inc} compiles and stores the compiled {\prob} in the first round of sampling. The weight updates for subsequent incremental sampling rounds are processed without recompilation, amortizing the compilation cost. 
Furthermore, for each sampling round, {\inc} simultaneously performs \textit{annotation} and sampling in a single bottom-up pass of the {\prob}, achieving speedup over existing approaches. We observed that {\inc} is significantly faster than the existing state-of-the-art in the incremental sampling routine. In our empirical evaluations, {\inc} achieved a median of $1.69\times$ runtime improvement over the state-of-the-art weighted sampler, WAPS~\cite{GSRM19}. Additional performance breakdown analysis supports our design choices in the development of {\inc}. In particular, {\prob} is on median $4.64\times$ smaller than the KC diagram used by the competing approach, and log-space \textit{annotation} computations are on median $1.12\times$ faster than arbitrary precision computations. Furthermore, {\inc} demonstrated significantly better handling of incremental sampling rounds, with incremental sampling rounds to be on median $5.9\%$ of the initial round, compared to $67.6\%$ for WAPS. 

The rest of the paper is organized as follows. We first introduce the relevant background knowledge and related works in Section~\ref{sec:background}. We then introduce {\prob} and its properties in Section~\ref{sec:prob}. In Section~\ref{sec:inc}, we introduce our weighted sampler {\inc}, detail important implementation decisions, and provide theoretical analysis of {\inc}. We then describe the extensive empirical evaluations and discuss the results in Section~\ref{sec:experiments}. Finally, we conclude in Section~\ref{sec:future-works}.

\section{Background and Related Work} \label{sec:background}

\paragraph*{\textbf{Knowledge Compilation}}

Knowledge compilation (KC) involves representing logical formulas as directed acyclic graphs (DAG), which are commonly referred to as knowledge compilation diagrams~\cite{DM02}. The goal of knowledge compilation is to allow for tractable computation of certain queries such as model counting and weighted sampling. There are many well-studied forms of knowledge compilation diagrams such as d-DNNF, SDD, BDD, ZDD, OBDD, AOBDD, and the likes~\cite{L1959,B1986,M1993,A01,A02,MDM08,A11}. In this work, we build our weighted sampler upon a variant of OBDD known as OBDD[$\land$]~\cite{LLY17}.

\paragraph*{\textbf{OBDD[$\land$]}}

Lee~\cite{L1959} introduced Binary Decision Diagram (BDD) as a way to represent Shannon expansion~\cite{B1854}. \cite{B1986} introduced fixed variable orderings to BDDs (known as OBDD)~\cite{B1986} for canonical representation and compression of BDDs via shared sub-graphs. Lai et al.~\cite{LLY17} introduced conjunction nodes to OBDDs (known as OBDD[$\land$])~\cite{LLY17} to further reduce the size of the resultant DAG to represent a given Boolean formula. In this work, we parameterize an OBDD[$\land$] to form a {\prob} that is used for weighted sampling.

\paragraph*{\textbf{Sampling}}

A Boolean variable $x$ can be assigned either \textit{true} or \textit{false}, and its literal refers to either $x$ or its negation. A Boolean formula is in conjunctive normal form (CNF) if it is a conjunction of clauses, with each clause being a disjunction of literals. A Boolean formula $F$ is satisfiable if there exists an assignment $\tau$ of its variables such that the $F$ evaluates to \textit{true}. The model count of Boolean formula $F$ refers to the number of distinct satisfying assignments of $F$.

Weighed sampling concerns with sampling elements from a distribution according to non-negative weights provided by a user-defined weight function $W$. In the context of this work, weighted sampling refers to the process of sampling from the space of satisfying assignments of a Boolean formula $F$. The weight function $W$ assigns a non-negative weight to each literal $l$ of $F$. The weight of an assignment $\tau$ is defined as the product of the weight of its literals. 

\paragraph*{\textbf{WAPS}}

KUS~\cite{SGRM18} utilizes knowledge compilation techniques, specifically Deterministic Decomposable Negation Normal Form (d-DNNF)~\cite{A02}, to perform uniform sampling in 2 passes of the d-DNNF. \textit{Annotation} is performed in the first pass, followed by sampling. WAPS~\cite{GSRM19} improves upon KUS by enabling weighted sampling via parameterization of the d-DNNF. WAPS performs sampling in a similar manner to KUS, the main difference being that the \textit{annotation} step in WAPS takes into account the provided weight function. In contrast, we introduce {\inc} which performs weighted sampling in a single pass by leveraging the DAG structure of {\prob}.

Knowledge compilation-based samplers typically perform incremental sampling as follows. The sampling space is first expressed as satisfying assignments of a Boolean formula, which is then compiled into the respective knowledge compilation form. In the following step, samples are drawn according to the given weight function $W$. Subsequently, the weights are updated depending on application logic and weighted sampling is performed again. The process is repeated until an application-specific stopping criterion is met. An example of such an application would be the Baital framework~\cite{BLM20}, developed to use incremental weighted sampling to generate test cases for configurable systems.

\section{{\prob}: - Probabilistic OBDD[$\land$]} \label{sec:prob}

{\prob} is a DAG composed of four types of nodes - \textit{conjunction}, \textit{decision}, \textit{true} and \textit{false} nodes. The internal nodes of a {\prob} consist of conjunction and decision nodes whereas the leaf nodes of the {\prob} consist of true and false nodes. A {\prob} is recursively made up of sub-{\prob}s that represent sub-formulas of Boolean formula $F$. We use $\varset{n}$ to refer to the set of variables of $F$ represented by a {\prob} with $n$ as the root node. $\subdiagram{n}$ refers to the sub-{\prob} starting at node $n$ and $\parent{n}$ refers to the immediate parent of node $n$ in {\prob}.

\subsection{{\prob} Structure}

\paragraph*{\textbf{Conjunction node ($\land$-node)}}
A $\land$-node $n_c$ represents conjunctions in the assignment space. There are no limits to the number of child nodes that $n_c$ can have. However, the set of variables ($\varset{\cdot}$) of each child node of $n_c$ must be disjoint. An example of a $\land$-node would be $n2$ in Figure~\ref{fig:smooth-prob-example}. Notice that $\varset{n4} = \{z\}$ and $\varset{n5} = \{y\}$ are disjoint.

\paragraph*{\textbf{Decision node}}
A decision node $n_d$ represents decisions on the associated Boolean variable $\var{n_d}$ in Boolean formula $F$ that the {\prob} represents. A decision node can have exactly two children - \textit{lo-child} ($\lo{n_d}$) and \textit{hi-child} ($\hi{n_d}$). $\lo{n_d}$ represents the assignment space when $\var{n_d}$ is set to \textit{false} and $\hi{n_d}$ represents otherwise. $\theta_{n_{d_{hi}}}$ and $\theta_{n_{d_{lo}}}$ refer to the parameters associated with the edge connecting decision node $n_d$ with $\hi{n_d}$ and $\lo{n_d}$ respectively in a {\prob}. Node $n1$ in Figure~\ref{fig:smooth-prob-example} is a decision node with $\var{n1} = x$, $\hi{n1} = n3$ and $\lo{n1} = n2$.

\paragraph*{\textbf{True and False nodes}}
True ($\top$) and false ($\bot$) nodes are leaf nodes in a {\prob}. Let $\tau$ be an assignment of all variables of Boolean formula $F$ and let {\prob} $\psi$ represent $F$. $\tau$ corresponds to a traversal of $\psi$ from the root node to leaf nodes. The traversal follows $\tau$ at every decision node and visits all child nodes of every conjunction node encountered along the way. $\tau$ is a satisfying assignment if all parts of the traversal eventually lead to the true node. $\tau$ is not a satisfying assignment if any part of the traversal leads to the false node. With reference to Figure~\ref{fig:smooth-prob-example}, let $\tau_1 = \{x, y, \neg z\}$ and $\tau_2 = \{x, y, z\}$. For $\tau_1$, the traversal would visit $n1, n3, n6, n7, n9$, and $\tau_1$ is a satisfying assignment since the traversal always leads to $\top$ node ($n9$). As a counter-example, $\tau_2$ is not a satisfying assignment with its corresponding traversal visiting $n1, n3, n6, n7, n8, n9$. $\tau_2$ traversal visits $\bot$ node ($n8$) because variable $z \mapsto \textit{true}$ in $\tau_2$ and $\hi{n6}$ is node $n8$.

\begin{figure}[htb]
    \centering
    \begin{tikzpicture}[
    roundnode/.style={circle, draw=black!60, very thick, minimum size=7mm},
    ]
        \node[roundnode](x) at (0, 0){$x$};
        \node[roundnode](left-x-and) at (-2, -1){$\land$};
        \node[roundnode](right-x-and) at (2, -1){$\land$};
        \node[roundnode](y) at (-1, -2){$y$};
        \node[roundnode](z) at (1, -2){$z$};
        \node[roundnode](z-c) at (-3, -2){$z$};
        \node[roundnode](y-c) at (3, -2){$y$};
        \node[roundnode](t) at (1, -3.5){$\top$};
        \node[roundnode](f) at (-1, -3.5){$\bot$};

        \draw [dashed,->] (x) -- (left-x-and);
        \draw [->] (x) -- (right-x-and);
        \draw [->] (y) -- (t);
        \draw [dashed,->] (y) -- (f);
        \draw [dashed,->] (z) -- (t);
        \draw [->] (z) -- (f);
        \draw [->] (left-x-and) -- (y);
        \draw [->] (left-x-and) -- (z-c);
        \draw [->] (right-x-and) -- (z);
        \draw [->] (right-x-and) -- (y-c);
        \draw [->] (z-c) to[out=-30,in=170] (t);
        \draw [dashed,->] (z-c) to[out=-10,in=150] (t);
        \draw [->] (y-c) to[out=-110,in=10] (t);
        \draw [dashed,->] (y-c) to[out=-160,in=50] (t);

        \node at (-0.6, 0) {$n1$};
        \node at (-2.6, -1){$n2$};
        \node at (1.4, -1){$n3$};
        \node at (-3.6, -2){$n4$};
        \node at (-1.6, -2){$n5$};
        \node at (0.4, -2){$n6$};
        \node at (2.4, -2){$n7$};
        \node at (-1.6, -3.5){$n8$};
        \node at (1.6, -3.5){$n9$};
    \end{tikzpicture}
    \caption{A smooth {\prob} $\psi_1$ with 9 nodes, $n1, ... , n9$, representing $F = (x \lor y) \land (\neg x \lor \neg z)$. Branch parameters are omitted}
    \label{fig:smooth-prob-example}
\end{figure}

\subsection{{\prob} Parameters} \label{subsec:prob-param}

In the {\prob} structure, each decision node $n_d$ has two parameters $\theta_{\lo{n_d}}$ and $\theta_{\hi{n_d}}$, associated with the two branches of $n_d$, which sums up to $1$. $\theta_{\lo{n_d}}$ is the normalized weight of the literal $\neg \var{n_d}$ and similarly, $\theta_{\hi{n_d}}$ is that of the literal $\var{n_d}$. One can view $\theta_{\lo{n_d}}$ to be the probability of picking $\neg \var{n_d}$ and $\theta_{\hi{n_d}}$ to be that of picking $\var{n_d}$ by the \textit{determinism} property introduced later. Let $x_i$ be $\var{n_d}$. Given a weight function $W$:

\begin{align*}
    & \theta_{\lo{n_d}} = \frac{W(\neg x_i )}{W(\neg x_i ) + W(x_i)} &  \theta_{\hi{n_d}} = \frac{W(x_i)}{W(\neg x_i ) + W(x_i)}
\end{align*}

\subsection{{\prob} Properties}

The {\prob} structure has important properties such as \textit{determinism} and \textit{decomposability}. In addition to the \textit{determinism} and \textit{decomposability} properties, we ensure that {\prob}s used in this work have the \textit{smoothness} property through a smoothing process (Algorithm~\ref{alg:smooth}).

\begin{property}[Determinism]
    For every decision node $n_d$, the set of satisfying assignments represented by $\hi{n_d}$ and $\lo{n_d}$ are logically disjoint.
\end{property}

\begin{property}[Decomposability]
    For every conjunction node $n_c$, $\varset{c_i} \cap \varset{c_j} = \emptyset$ for all $c_i$ and $c_j$ where $c_i, c_j \in \child{n_c}$ and $c_i \not= c_j$. 
\end{property}
    
\begin{property}[Smoothness]
    For every decision node $n_d$, $\varset{\hi{n_d}} = \varset{\lo{n_d}}$.
\end{property}

\subsection{Joint Probability Calculation with {\prob}}

In Section~\ref{subsec:prob-param}, we mention that one can view the branch parameters as the probability of choosing between the positive and negative literal of a decision node. Notice that because of the \textit{decomposability} and \textit{determinism} properties of {\prob}, it is straightforward to calculate the joint probabilities at every given node. At each conjunction node $n_c$, since the variable sets of the child nodes of $n_c$ are disjoint by \textit{decomposability}, the joint probability of $n_c$ is simply the product of joint probabilities of each child node. At each decision node $n_d$, there are only two possible outcomes on $\var{n_d}$ - positive literal $\var{n_d}$ or negative literal $\neg \var{n_d}$. By \textit{determinism} property, the joint probability is the sum of the two possible scenarios. Formally, the calculations for joint probabilities $P'$ at each node in {\prob} are as follows:

\begin{align}
    P'\text{ of $\land$-node }n_c &= \prod_{c \in \child{n_c}} P'(c) \label{eq:eq1}\tag{EQ1}
    \\
    P'\text{ of decision-node }n_d &= \theta_{\lo{n_d}} \times P'(\lo{n_d}) \label{eq:eq2}\tag{EQ2}
    \\
    & + \theta_{\hi{n_d}} \times P'(\hi{n_d}) \notag
\end{align}

For true node $n$, $P'(n)=1$ because it represents satisfying assignments when reached. In contrast $P'(n)=0$ when $n$ is a \textit{false} node as it represents non-satisfying assignments. In Proposition~\ref{prop:sampling-correctness}, we show that weighted sampling is equivalent to sampling according to joint probabilities of satisfying assignments of a {\prob}.

\section{{\inc} - Sampling from {\prob}} \label{sec:inc}

In this section, we introduce {\inc} - a bottom-up algorithm for weighted sampling on {\prob}. We first describe {\inc} for drawing one sample and subsequently describe how to extend {\inc} to draw $k$ samples at once. We also provide proof of correctness that {\inc} is indeed performing weighted sampling. As a side note, samples are drawn with replacement, in line with the existing state-of-the-art weighted sampler~\cite{GSRM19}.

\subsection{Preprocessing {\prob}}

In the main sampling algorithm (Algorithm~\ref{alg:inc-sampling}) to be introduced later in this section, the input is a smooth {\prob}. As a preprocessing step, we introduce {\smooth} algorithm that takes in a {\prob} $\psi$ and performs smoothing.

\begin{algorithm}[htb]
    \begin{flushleft}
        \textbf{Input}: {\prob} $\psi$\\
        \textbf{Output}: smooth {\prob}
    \end{flushleft}
    \begin{algorithmic}[1] %
        \STATE $\kappa \leftarrow \mathsf{initMap()}$
        \FOR{node $n$ of $\psi$ in bottom-up order}
            \IF{$n$ is \textit{true} node or \textit{false} node} 
                \STATE $\kappa[n] \leftarrow \emptyset$
            \ELSIF{$n$ is $\land$-node}
                \STATE $\kappa[n] \leftarrow$ $\mathsf{unionVarSet}$(${\child{n}}, \kappa$)
            \ELSE
                \IF{$\kappa[{\hi{n}}] - \kappa[{\lo{n}}] \neq \emptyset$} \label{line:smooth-lo}
                    \STATE lset $\leftarrow \kappa[{\hi{n}}] - \kappa[{\lo{n}}]$
                    \STATE lcNode $\leftarrow new\land-node()$
                    \STATE lcNode.$\mathsf{addChild({\lo{n}})}$
                    \FOR{var $v$ in lset}
                        \STATE dNode $\leftarrow \mathsf{checkMakeTrueDecisionNode}(v)$
                        \STATE lcNode.$\mathsf{addChild}$(dNode)
                    \ENDFOR
                    \STATE ${\lo{n}} 
                    \leftarrow $lcNode
                \ENDIF
                \IF{$\kappa[{\lo{n}}] - \kappa[{\hi{n}}] \neq \emptyset$} \label{line:smooth-hi}
                    \STATE rset $\leftarrow \kappa[{\lo{n}}] - \kappa[{\hi{n}}]$
                    \STATE rcNode $\leftarrow new\land-node()$
                    \STATE rcNode.$\mathsf{addChild({\hi{n}})}$
                    \FOR{var $v$ in rset}
                        \STATE dNode $\leftarrow \mathsf{checkMakeTrueDecisionNode}(v)$
                        \STATE rcNode.$\mathsf{addChild}$(dNode)
                    \ENDFOR
                    \STATE ${\hi{n}} 
                    \leftarrow $rcNode
                \ENDIF
                \STATE $\kappa[n] \leftarrow {\var{n}} \cup \mathsf{unionVarSet}$($\{{\hi{n}}, {\lo{n}}\}$)
            \ENDIF
        \ENDFOR
        \STATE \textbf{return} $\psi$
    \end{algorithmic}
    \caption{$\mathsf{Smooth}$ - returns a smoothed {\prob}}
    \label{alg:smooth}
\end{algorithm}

The $\mathsf{Smooth}$ algorithm processes the nodes in the input {\prob} $\psi$ in a bottom-up manner while keeping track of $\varset{n}$ for every node n in $\psi$ using a map $\kappa$. \textit{True} and \textit{false} nodes have $\emptyset$ as they are leaf nodes and do not represent any variables. At each conjunction node, its variable set is the union of variable sets of its child nodes.

The smoothing happens at decision node $n$ in $\psi$ when $\varset{\lo{n}}$ and $\varset{\hi{n}}$ do not contain the same set of variables as shown by lines~\ref{line:smooth-lo} and~\ref{line:smooth-hi} of Algorithm~\ref{alg:smooth}. In the smoothing process, a new conjunction node (\textit{lcNode} for ${\lo{n}}$ and \textit{rcNode} for ${\hi{n}}$) is created to replace the corresponding child of $n$, with the original child node now set as a child of the conjunction node. Additionally, for each of the missing variables $v$, a decision node representing $v$ is created and added as a child of the respective conjunction node. The decision nodes created during smoothing have both their lo-child and hi-child set to the \textit{true} node. To reduce memory footprint, we check if there exists the same decision node before creating it in the $\mathsf{checkMakeTrueDecisionNode}$ function.

\begin{figure}[htb]
    \centering
    \begin{tikzpicture}[
    roundnode/.style={circle, draw=black!60, very thick, minimum size=7mm},
    ]
    \node[roundnode](x) at (0, 0){$x$};
    \node[roundnode](y) at (-1, -1){$y$};
    \node[roundnode](z) at (1, -1){$z$};
    \node[roundnode](t) at (1, -2.25){$\top$};
    \node[roundnode](f) at (-1, -2.25){$\bot$};

    \draw [dashed,->] (x) -- (y);
    \draw [->] (x) -- (z);
    \draw [->] (y) -- (t);
    \draw [dashed,->] (y) -- (f);
    \draw [dashed,->] (z) -- (t);
    \draw [->] (z) -- (f);

    \node at (-0.6, 0) {$n1$};
    \node at (-1.6, -1) {$n5$};
    \node at (0.4, -1) {$n6$};
    \node at (-1.6, -2.25) {$n8$};
    \node at (0.4, -2.25) {$n9$};

    \end{tikzpicture}
    \caption{A {\prob} $\psi_2$ representing Boolean formula $F = (x \lor y) \land (\neg x \lor \neg z)$, branch parameters are omitted}
    \label{fig:Prob-non-smooth}
\end{figure}

As an example, we refer to $\psi_2$ in Figure~\ref{fig:Prob-non-smooth}. It is obvious that $\psi_2$ is not smooth, because $\varset{\lo{n1}} = \{y\}$ and $\varset{\hi{n1}} = \{z\}$. In the smoothing process, we replace ${\lo{n1}}$ with a new conjunction node $n2$ and add a decision node $n4$ representing missing variable $z$, with both child set to \textit{true} node $n9$. We repeat the steps for ${\hi{n1}}$ to arrive at {\prob} $\psi_1$ in Figure~\ref{fig:smooth-prob-example}.

\subsection{Sampling Algorithm}

\begin{algorithm}[htb]
    \begin{flushleft}
        \textbf{Input}: smooth {\prob} $\psi$\\
        \textbf{Output}: a sampled satisfying assignment
    \end{flushleft}
    \begin{algorithmic}[1] %
        \STATE cache $\omega$ $\leftarrow$ $\mathsf{initCache()}$
        \STATE joint prob cache $\varphi$ $\leftarrow$ $\mathsf{initCache()}$
        \STATE $\psi'$ $\leftarrow$ $\mathsf{hideFalseNode}(\psi)$ \label{line:zddc-preprocess-roulette} 
        \FOR{node $n$ of $\psi'$ in bottom-up order}
            \IF{$n$ is \textit{true} node} 
                \STATE $\omega[n] \leftarrow \emptyset$
                \STATE $\varphi[n] \leftarrow 1$ \label{line:true-node-calc}
            \ELSIF{$n$ is $\land$-node}
                \STATE $\omega[n] \leftarrow$ $\mathsf{unionChild}$(${\child{n}}, \omega$) \label{line:union-conjunction-child-roulette}
                \STATE $\varphi[n] \leftarrow \prod_{c \in {\child{n}}} \varphi[c]$ \label{line:conjunction-node-calc}
            \ELSE
                \STATE $p_{lo} \leftarrow \theta_{\lo{n}} \times \varphi[{\lo{n}}]$ \label{line:sample-decision-start-roulette}
                \STATE $p_{hi} \leftarrow \theta_{\hi{n}} \times \varphi[{\hi{n}}]$
                \STATE $p_{joint} \leftarrow p_{lo} + p_{hi}$
                \STATE $\varphi[n] \leftarrow p_{joint}$
                \STATE $r$ $\leftarrow$ $x \sim \mathsf{binomial}(1, \frac{p_{hi}}{p_{joint}})$ \label{line:binomial-random}
                \IF{$r$ is $1$} 
                    \STATE $\omega[n]$ $\leftarrow$ $\omega[\hi{n}] \cup \var{n}$
                \ELSE
                    \STATE $\omega[n]$ $\leftarrow$ $\omega[\lo{n}] \cup \neg \var{n}$
                \ENDIF \label{line:sample-decision-end-roulette}
            \ENDIF
        \ENDFOR
        \STATE \textbf{return} $\omega$[rootnode($\psi$)]
    \end{algorithmic}
    \caption{{\inc} - returns a satisfying assignment based on {\prob} $\psi$ parameters}
    \label{alg:inc-sampling}
\end{algorithm}

{\inc} takes a {\prob} $\psi$ representing Boolean formula $F$ and draws a sample from the space of satisfying assignments of $F$, the process is illustrated by Algorithm~\ref{alg:inc-sampling}. {\inc} performs sampling in a bottom-up manner while integrating the \textit{annotation} process in the same bottom-up pass. Since we want to sample from the space of satisfying assignments we can ignore \textit{false} nodes in $\psi$ entirely by considering a sub-DAG that excludes \textit{false} nodes and edges leading to them, as shown by line~\ref{line:zddc-preprocess-roulette}. As an example, $\mathsf{hideFalseNode}$ when applied to $\psi_1$ would remove node $n8$ and the edges immediately leading to it. Next, {\inc} processes each of the remaining nodes in bottom-up order while keeping two caches - $\omega$ to store the partial samples from each node, $\varphi$ to store the joint probability at each node. {\inc} starts with $\emptyset$ at the \textit{true} node since there is no associated variable.

At each conjunction node, {\inc} takes the union of the child nodes in line~\ref{line:union-conjunction-child-roulette}. Using $n2$ in Figure~\ref{fig:smooth-prob-example} as an example, if sample drawn at $n4$ is $\omega[n4] = \{\neg z\}$ and at $n5$ is $\omega[n5] = \{y\}$, then $\mathsf{unionChild}(\child{n2},\omega) = \{y, \neg z\}$. At each decision node $n$, a decision on $\var{n}$ is sampled from lines~\ref{line:binomial-random} to~\ref{line:sample-decision-end-roulette}. We first calculate the joint probabilities, $p_{lo}$ and $p_{hi}$ of choosing $\neg \var{n}$ and choosing $\var{n}$. Subsequently, we sample decision on $\var{n}$ using a binomial distribution in line~\ref{line:binomial-random} with the probability of success being the joint probability of choosing $\var{n}$. After processing all nodes, the sampled assignment is the output at root node of $\psi$.

\paragraph*{\textbf{Extending {\inc} to $k$ samples}}
It is straightforward to extend the single sample {\inc} shown in Algorithm~\ref{alg:inc-sampling} to draw $k$ samples in a single pass, where $k$ is a user-specified number. At each node, we have to store a list of $k$ independent copies of partial assignments drawn in $\omega$. At each conjunction node $n_c$, we perform the same union process in line~\ref{line:union-conjunction-child-roulette} of Algorithm~\ref{alg:inc-sampling} for child outputs in the same indices of the respective lists in $\omega$. More specifically, if $n_c$ has child nodes $c_x$ and $c_y$, the outputs of index $i$ are combined to get the output of $n_c$ at index $i$. This process is performed for all indices from 1 to $k$. At each decision node $n_d$, we now draw $k$ independent samples instead of a single sample from the binomial distribution as shown in line~\ref{line:binomial-random}. The sampling step in lines~\ref{line:binomial-random} to~\ref{line:sample-decision-end-roulette} are performed independently for the $k$ random numbers. There is no change necessary for the calculation of joint probabilities in Algorithm~\ref{alg:inc-sampling} as there is no change in literal weights. 

\paragraph*{\textbf{Incremental sampling}}
Given a Boolean formula $F$ and weight function $W$, {\inc} performs incremental sampling with the sampling process shown in Figure~\ref{fig:incremetal-sampling-flow}. In the initial round, {\inc} compiles $F$ and $W$ into a {\prob} $\psi$ and performs sampling. Subsequent rounds involve applying a new set of weights $W$ to $\psi$, typically generated based on existing samples by the controller~\cite{BLM20}, and performing weighted sampling according to the updated weights. The number of sampling rounds is determined by the controller component, whose logic varies according to application. 

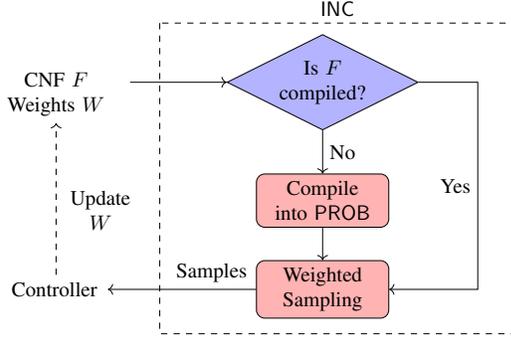
\begin{figure}[htb]
    \centering
    \begin{adjustbox}{width=0.8\columnwidth}
    \begin{tikzpicture}[
    node distance=2cm,
    diamondnode/.style={diamond, minimum width=1.5cm, minimum height=0.5cm, text centered, draw=black, fill=blue!30, aspect=2, inner sep=0pt},
    rectnode/.style={rectangle, rounded corners, minimum width=1.5cm, minimum height=0.5cm,text centered, draw=black, fill=red!30},
    rectnodenofill/.style={rectangle, rounded corners, minimum width=1.5cm, minimum height=0.5cm,text centered, draw=black},
    orangerectnode/.style={rectangle, rounded corners, minimum width=1.5cm, minimum height=0.5cm,text centered, draw=black, fill=orange!30},
    rectnodenofill/.style={rectangle, rounded corners, minimum width=1.5cm, minimum height=0.5cm,text centered, draw=black},
    ]
    \node (compile) [rectnode, text width=2cm] at (1, -2.0) {Compile into {\prob}};
    \node (sample) [rectnode, text width=2cm] at (1, -3.5) {Weighted Sampling};
    \node (iscompiled) [diamondnode, text width=1.75cm] at (1, 0) {Is $F$ compiled?};

    \node (input-cnf)[text width=2.25cm] at (-3.5, 0) {\begin{center} CNF $F$ \\ Weights $W$ \end{center}};
    \node (output) [text width=1.5cm]  at (-3.5, -3.5) {Controller};

    \draw [->] (input-cnf) -- (iscompiled);
    \draw [->] (compile) -- (sample);
    \draw [->] (iscompiled) -- node[anchor=west] {No} (compile);
    \draw [->] (sample) -- node[anchor=south,pos=0.3] {Samples} (output);
    \draw [->] (iscompiled.east) -- +(1, 0) |- +(1, -3.5) node[anchor=east,pos=0.25]{Yes} -- (sample.east);

    \draw [->,dashed] (output) -- node[anchor=west, text width=1.25cm] {\begin{center}Update\\$W$\end{center}} (input-cnf);

    \draw[dashed] (-1.75, 1) rectangle (4.25, -4.25);
    \node (sampler) [] at (1.25, 1.25) {{\inc}};

    \end{tikzpicture}
    \end{adjustbox}

    \caption{{\inc}'s incremental sampling flow}
    \label{fig:incremetal-sampling-flow}
\end{figure}

\subsection{Implementation Decisions} \label{subsec:inc-tech-analysis}

\paragraph*{\textbf{Log-Space Calculations}}
{\inc} performs \textit{annotation} process - computation of joint probabilities in log space. This design choice is made to avoid the usage of arbitrary precision math libraries, which WAPS utilized to prevent numerical underflow after many successive multiplications of probability values. Using the $\mathsf{LogSumExp}$ trick below, it is possible to avoid numerical underflow.

\begin{align*}
    \log(a + b) &= \log(a) + \log( 1 + \frac{b}{a}) \\
    &= \log(a) + \log(1 + \exp( \log(b) - \log(a) ))
\end{align*}

The joint probability at a decision node $n_d$ is given by $\theta_{\lo{n_d}} \times \text{joint probability of } \lo{n_d}  + \theta_{\hi{n_d}} \times \text{joint probability of } \hi{n_d} $. Notice that if we were to perform the calculation in log space, we would have to add the two weighted log joint probabilities, termed $p_{lo}$ and $p_{hi}$ in Algorithm~\ref{alg:inc-sampling}. Using the $\mathsf{LogSumExp}$ trick, we do not need to exponentiate $p_{lo}$ and $p_{hi}$ independently which risks running into numerical underflow. Instead, we only need to exponentiate the difference of $p_{lo}$ and $p_{hi}$ which is more numerically stable. Equations \ref{eq:eq1} and \ref{eq:eq2} can be implemented in log space as follows:
\begin{align*}
    Q \text{ of $\land$-node }n_c &= \sum_{c \in \child{n_c}} Q(c) \\
    Q \text{ of decision-node }n_d &= \mathsf{LogSumExp}[ \\ &\log(\theta_{\lo{n_d}}) + Q(\lo{n_d}), \\
    & \log(\theta_{\hi{n_d}}) + Q(\hi{n_d})]
\end{align*}
In the equations above, $Q$ refers to the corresponding log joint probabilities in \ref{eq:eq1} and \ref{eq:eq2}. In the experiments section, we detail the runtime advantages of using log computations compared to arbitrary precision math computations.

\paragraph*{\textbf{Dynamic Annotation}}
In existing state-of-the-art weighted sampler WAPS, sampling is performed in two passes - the first pass performs \textit{annotation} and the second pass samples assignments according to the joint probabilities. In {\inc}, we combine the two passes into a single bottom-up pass performing \textit{annotation} dynamically while sampling at each node.

\subsection{Theoretical Analysis} \label{subsec:theo-analysis}

\begin{proposition} \label{prop:param-correctness}
    Branch parameters of any decision node $n_d$ are correct sampling probabilities, i.e. $W(x_i) : W(\neg x_i) =  \theta_{\hi{x_i}} : \theta_{\lo{x_i}}$ where $\var{n_d} = x_i$.
\end{proposition}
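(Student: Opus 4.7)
The plan is to prove Proposition~\ref{prop:param-correctness} by direct computation from the definition of the branch parameters given in Section~\ref{subsec:prob-param}. Since the branch parameters $\theta_{\lo{n_d}}$ and $\theta_{\hi{n_d}}$ are defined explicitly in terms of the weight function $W$ and the variable $x_i = \var{n_d}$, the proposition reduces to verifying that the ratio of these defined quantities matches the ratio of the literal weights.

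First, I would recall the definitions, namely that $\theta_{\lo{n_d}} = W(\neg x_i)/(W(\neg x_i) + W(x_i))$ and $\theta_{\hi{n_d}} = W(x_i)/(W(\neg x_i) + W(x_i))$. Note that these are well-defined whenever $W(x_i) + W(\neg x_i) > 0$, which is implicit in the setup since otherwise no satisfying assignment involving $x_i$ could carry positive weight. Second, I would form the ratio $\theta_{\hi{n_d}} / \theta_{\lo{n_d}}$ and observe that the common normalizing denominator $W(\neg x_i) + W(x_i)$ cancels, leaving $W(x_i) / W(\neg x_i)$. This directly yields the desired proportionality $W(x_i) : W(\neg x_i) = \theta_{\hi{x_i}} : \theta_{\lo{x_i}}$.

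There is essentially no hard step here: the proposition is an immediate consequence of how the parameters are defined, and its role is primarily to fix notation and justify that the sampling probabilities used in Algorithm~\ref{alg:inc-sampling} (specifically at line~\ref{line:binomial-random}) are coherent with the input weight function $W$. The only minor subtlety is to note that a ratio statement only requires equality up to a common positive scalar, which is exactly the shared denominator being cancelled, so no auxiliary lemma is needed. This proposition will then serve as the base case for the subsequent correctness argument (Proposition~\ref{prop:sampling-correctness}) that relates joint probabilities computed by \ref{eq:eq1}--\ref{eq:eq2} to weighted sampling over the satisfying assignments.
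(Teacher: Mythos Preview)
Your proposal is correct and matches the paper's own proof essentially verbatim: both simply substitute the definitions of $\theta_{\hi{n_d}}$ and $\theta_{\lo{n_d}}$ and cancel the common normalizer $W(x_i)+W(\neg x_i)$ to obtain the claimed ratio. The only cosmetic difference is that the paper writes the chain starting from $W(x_i)/W(\neg x_i)$ and ending at $\theta_{\hi{x_i}}/\theta_{\lo{x_i}}$, whereas you go in the reverse direction.
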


\begin{proof}
    \begin{align*}
        \frac{W(x_i)}{W(\neg x_i)} &= \frac{\frac{W(x_i)}{W(x_i) + W(\neg x_i)}}{\frac{W(\neg x_i)}{W(x_i) + W(\neg x_i)}} = \frac{\theta_{\hi{x_i}}}{\theta_{\lo{x_i}}}
    \end{align*}
    
    We start with the ratio of literal weights of $x$, multiply both numerator and denominator by $W(x_i) + W(\neg x_i)$ and arrive at the ratio of branch parameters of $n_d$. Notice that only the ratio matters for sampling correctness and not the absolute value of weights.
\end{proof}

\begin{remark}
    Let $n_d$ be an arbitrary decision node in {\prob} $\psi$. When performing sampling according to a weight function $W$, $\theta_{\lo{n_d}}$ is the probability of picking $\neg \var{n_d}$ and $\theta_{\hi{n_d}}$ is that of $\var{n_d}$. The \textit{determinism} property states that the choice of either literal is disjoint at each decision node.
\end{remark}

\begin{proposition} \label{prop:sampling-correctness}
    {\inc} samples an assignment $\tau$ from {\prob} $\psi$ with probability $\frac{1}{N} \prod_{l \in \tau} W(l)$, where $N$ is a normalization factor.
\end{proposition}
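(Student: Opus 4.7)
The plan is to prove by structural induction (bottom-up on $\psi$) the following joint invariant at every node $n$ of the {\prob}: (a) $\varphi[n] = \sum_{\tau} \prod_{l \in \tau} \pi(l)$, and (b) the partial assignment $\omega[n]$ produced by Algorithm~\ref{alg:inc-sampling} is sampled with probability $\prod_{l \in \omega[n]} \pi(l) \,/\, \varphi[n]$, where the sum in (a) ranges over satisfying assignments $\tau$ of $\varset{n}$ in $\subdiagram{n}$, and where $\pi(l) := W(l)/(W(x)+W(\neg x))$ for $l$ a literal of variable $x$. Note that by Proposition~\ref{prop:param-correctness}, $\pi(\var{n_d}) = \theta_{\hi{n_d}}$ and $\pi(\neg\var{n_d}) = \theta_{\lo{n_d}}$, so $\pi$ matches exactly the parameters used in lines~\ref{line:true-node-calc}--\ref{line:conjunction-node-calc} of Algorithm~\ref{alg:inc-sampling}.

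The base case is immediate: at a true node, $\omega[n]=\emptyset$, $\varphi[n]=1$, and the empty product equals $1$. For an $\land$-node $n_c$, \emph{decomposability} ensures the variable sets of its children partition $\varset{n_c}$, so every satisfying assignment $\tau$ of $\subdiagram{n_c}$ decomposes uniquely as $\bigcup_{c \in \child{n_c}} \tau_c$; the product $\varphi[n_c]=\prod_c \varphi[c]$ then expands to the desired sum, and since the children are sampled independently, the sampling probability of $\omega[n_c]$ factorizes into $\prod_c \prod_{l \in \omega[c]}\pi(l)/\varphi[c]$, which rearranges to $\prod_{l \in \omega[n_c]}\pi(l)/\varphi[n_c]$. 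For a decision node $n_d$ with $\var{n_d}=x$, \emph{smoothness} gives $\varset{\hi{n_d}}=\varset{\lo{n_d}}$ so $\varset{n_d}=\{x\}\cup\varset{\hi{n_d}}$, and \emph{determinism} guarantees that satisfying assignments splitting on $x\mapsto\text{true}$ versus $x\mapsto\text{false}$ are disjoint; the recurrence $\varphi[n_d]=\theta_{\hi{n_d}}\varphi[\hi{n_d}]+\theta_{\lo{n_d}}\varphi[\lo{n_d}]$ therefore realizes the sum in (a). For (b), conditioning on the binomial draw in line~\ref{line:binomial-random} and applying the inductive hypothesis to the chosen child yields sampling probability $(\theta_{\hi{n_d}}\varphi[\hi{n_d}]/\varphi[n_d]) \cdot (\prod_{l \in \omega[\hi{n_d}]}\pi(l)/\varphi[\hi{n_d}])$ in the hi-case (and symmetrically for lo), which simplifies to $\prod_{l \in \omega[n_d]}\pi(l)/\varphi[n_d]$.

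Applying the invariant to the root $r$ of (the false-hidden sub-DAG of) $\psi$, $\omega[r]$ is a full satisfying assignment $\tau$ of $F$ returned with probability $\prod_{l \in \tau}\pi(l)/\varphi[r]$. Substituting $\pi(l)=W(l)/(W(\var{l})+W(\neg\var{l}))$ and observing that every satisfying assignment of $F$ contains exactly one literal per variable of $F$ (which smoothness propagates up to $r$), the denominators collect into a common factor $\prod_{x}(W(x)+W(\neg x))$ independent of $\tau$. Absorbing this factor and $\varphi[r]$ together into a single normalization constant $N$ yields the claimed sampling probability $\frac{1}{N}\prod_{l \in \tau}W(l)$.

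The main obstacle is keeping the invariant clean: one must resist conflating $\varphi[n]$ (a sum of products of normalized $\pi$'s over \emph{partial} assignments inside $\subdiagram{n}$) with an unnormalized weight, and one must invoke all three properties in the right places, \emph{decomposability} at $\land$-nodes, \emph{determinism} for the disjointness of hi/lo contributions, and \emph{smoothness} both to make the hi/lo recurrence well-typed and to ensure that the per-variable denominators at the root absorb into a single constant $N$.
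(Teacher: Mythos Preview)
Your proof is correct and follows essentially the same approach as the paper's own argument: a structural case analysis on node types, invoking decomposability at $\land$-nodes and determinism at decision nodes, with smoothness guaranteeing that the sampled assignment is complete. Your version is considerably more rigorous---you make the bottom-up invariant on $(\varphi[n],\omega[n])$ explicit and carefully separate the normalized parameters $\pi$ from the raw weights $W$ before absorbing the per-variable denominators into $N$---whereas the paper's proof is more of an informal sketch that leaves these bookkeeping details implicit.
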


\begin{proof}
    The proof consists of two parts, one for $\land$-node and another for decision node.

    \paragraph*{\textbf{$\land$-node}} Let $n_c$ be an arbitrary conjunction node in {\prob} $\psi$. Recall that by decomposability property, $\forall c_i, c_j \in \child{n_c}$ and $c_i \neq c_j$, $\varset{c_i} \cap \varset{c_j} = \emptyset$. As such an arbitrary variable $x_i \in \varset{n_c}$ only belongs to the variable set of one child node $c_i \in \child{n_c}$. Therefore, assignment of $x_i$ can be sampled independent of $x_j$ where $x_j \in \varset{c_j}, \forall c_j \not= c_i$. Let $\tau'_{c_i}$ be partial assignment for child node $c_i \in \child{n_c}$. Notice that each partial assignment $\tau'_{c_i}$ is sampled independently of others as there are no overlapping variables, hence their joint probability is simply the product of their individual probabilities. This agrees with the weight of an assignment being the product of its components, up to a normalization factor.

    \paragraph*{\textbf{Decision node}} Let $n_d$ be an arbitrary decision node in {\prob} $\psi$ and $x_d$ be $\var{n_d}$. At $n_d$, we sample an assignment of $x_d$ based on the parameters $\theta_{\lo{x_d}}$ and $\theta_{\hi{x_d}}$, which are probabilities of literal assignment by Proposition~\ref{prop:param-correctness}. By Proposition~\ref{prop:param-correctness}, one can see that the assignment of $x_d$ is sampled correctly according to $W$. As the sampling process at $n_d$ is independent of its child nodes by the determinism property, the joint probability of sampled assignment of $x_d$ and the output partial assignment from the corresponding child node would be the product of their probabilities. Notice that the joint probability aligns with the definition of weight of an assignment being the product of the weight of its literals, up to a normalization factor.

    Since we do not consider the \textit{false} node and treat it as having 0 probability, we always sample from satisfying assignments by starting at the \textit{true} node in bottom-up ordering. Reconciling the sampling process at the two types of nodes, it is obvious that any combination of decision and $\land$-nodes encountered in the sampling process would agree with a given weight function $W$ up to a normalization factor $1/N$. In fact, $N = \sum_{\tau_i \in S} W(\tau_i)$ where $S$ is the set of satisfying assignments of Boolean formula $F$ that $\psi$ represents. As mentioned in Proposition~\ref{prop:param-correctness} proof, normalization factors do not affect the correctness of sampling according to $W$, and we have shown that {\inc} performs weighted sampling correctly under multiplicative weight functions.
\end{proof}

\begin{remark}
From the proof of Proposition 2, the determinism and decomposability property is important to ensure the correctness of {\inc}. The smoothness property is important to ensure that the sampled assignment by {\inc} is complete. For formula $F = (x \lor y) \land (\neg x \lor \neg z)$, an assignment $\tau_1$ sampled from a non-smooth {\prob} could be $\{x, \neg z \}$. Notice that $\tau_1$ is missing assignment for variable $y$. By performing smoothing, we will be able to sample a complete assignment of all variables in the Boolean formula as both child nodes of each decision node $n$ have the same $\varset{\cdot}$.
\end{remark}

\section{Experiments} \label{sec:experiments}

We implement {\inc} in Python 3.7.10, using NumPy 1.15 and Toposort package. In our experiments, we make use of an off-the-shelf KC diagram compiler, KCBox~\cite{LMY21}. In the later parts of this section, we performed additional comparisons against an implementation of {\inc} using the Gmpy2 arbitrary precision math package ({\incap}) to determine the impact of log-space \textit{annotation} computations.

Our benchmark suite consists of instances arising from a wide range of real-world applications such as DQMR networks, bit-blasted versions of SMT-LIB (SMT) benchmarks, ISCAS89 circuits, and configurable systems~\cite{GSRM19,BLM20}. For incremental updates, we rely on the weight generation mechanism proposed in the context of prior applications of incremental sampling~\cite{BLM20}. In particular, new weights are generated based on the samples from the previous rounds, resulting in the need to recompute joint probabilities in each round. Keeping in line with prior work, we perform 10 rounds (R1-R10) of incremental weighted sampling and 100 samples drawn in each round. The experiments were conducted with a timeout of 3600 seconds on clusters with Intel Xeon Platinum 8272CL processors. 

In this section, we detail the extensive experiments conducted to understand {\inc}'s runtime behavior and to compare it with the existing state-of-the-art weighted sampler WAPS~\cite{GSRM19} in incremental weighted sampling tasks. We chose WAPS as it has been shown to achieve significant runtime improvement over other samplers, and accordingly has emerged as a sampler of the choice for practical applications~\cite{BLM20}.  In particular, our empirical evaluation sought to answer the following questions:

\begin{description}
    \item[RQ 1] How does {\inc}'s incremental weighted sampling runtime performance compare to current state-of-the-art? 
    \item[RQ 2] How does using {\prob} affect runtime performance?
    \item[RQ 3] How does log-space calculations impact runtime performance?
    \item[RQ 4] Does {\inc} correctly perform weighted sampling? 
\end{description}

\begin{figure*}[htb]
    \centering
    \begin{subfigure}[t]{.47\textwidth}
        \centering
        \includegraphics[width=\linewidth]{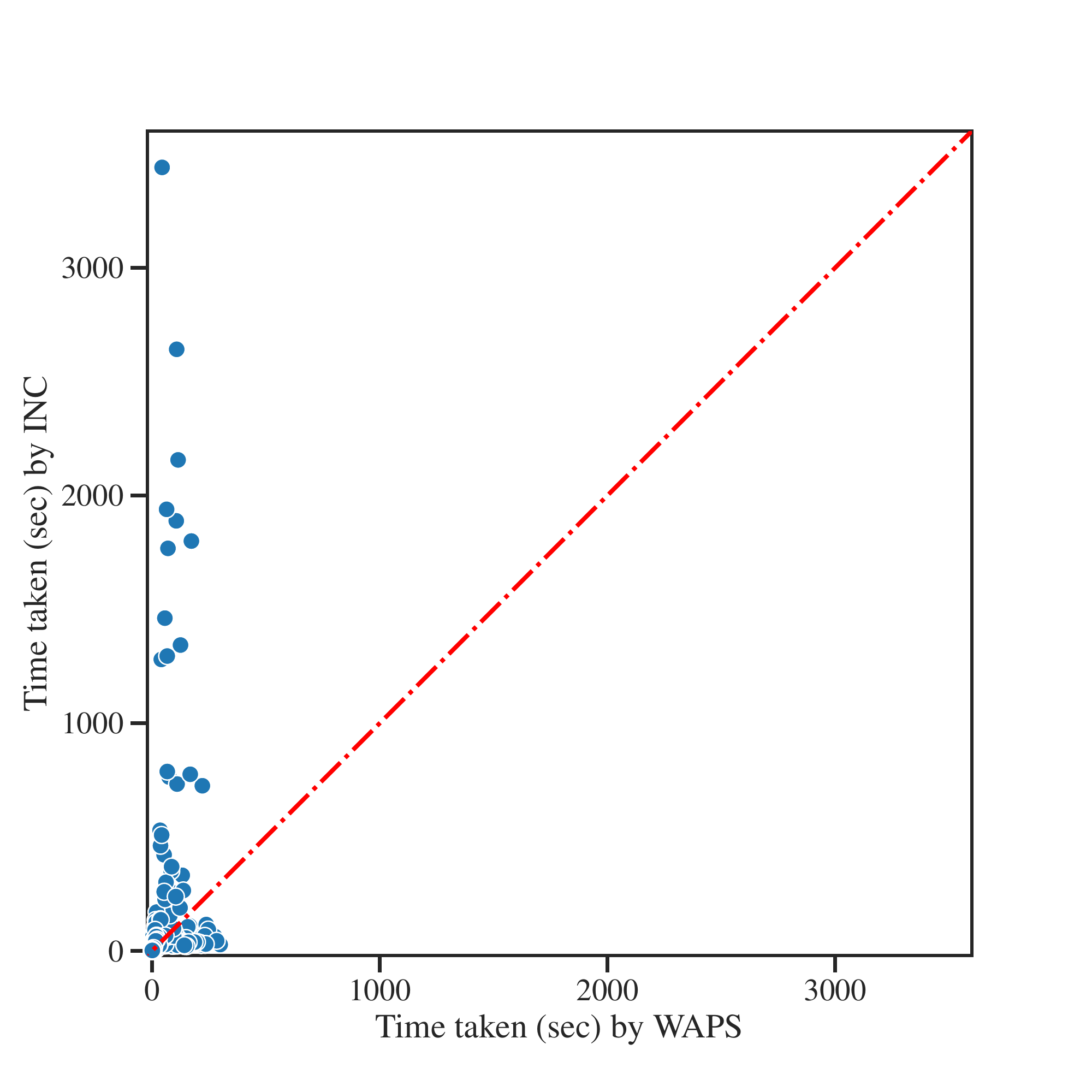}
        \caption{Single Round (R1) Runtime Scatter Plot}
        \label{fig:runtime-comparison-single-round}
    \end{subfigure}%
    \hspace{0.05\linewidth}
    \begin{subfigure}[t]{.47\textwidth}
        \centering
        \includegraphics[width=\linewidth]{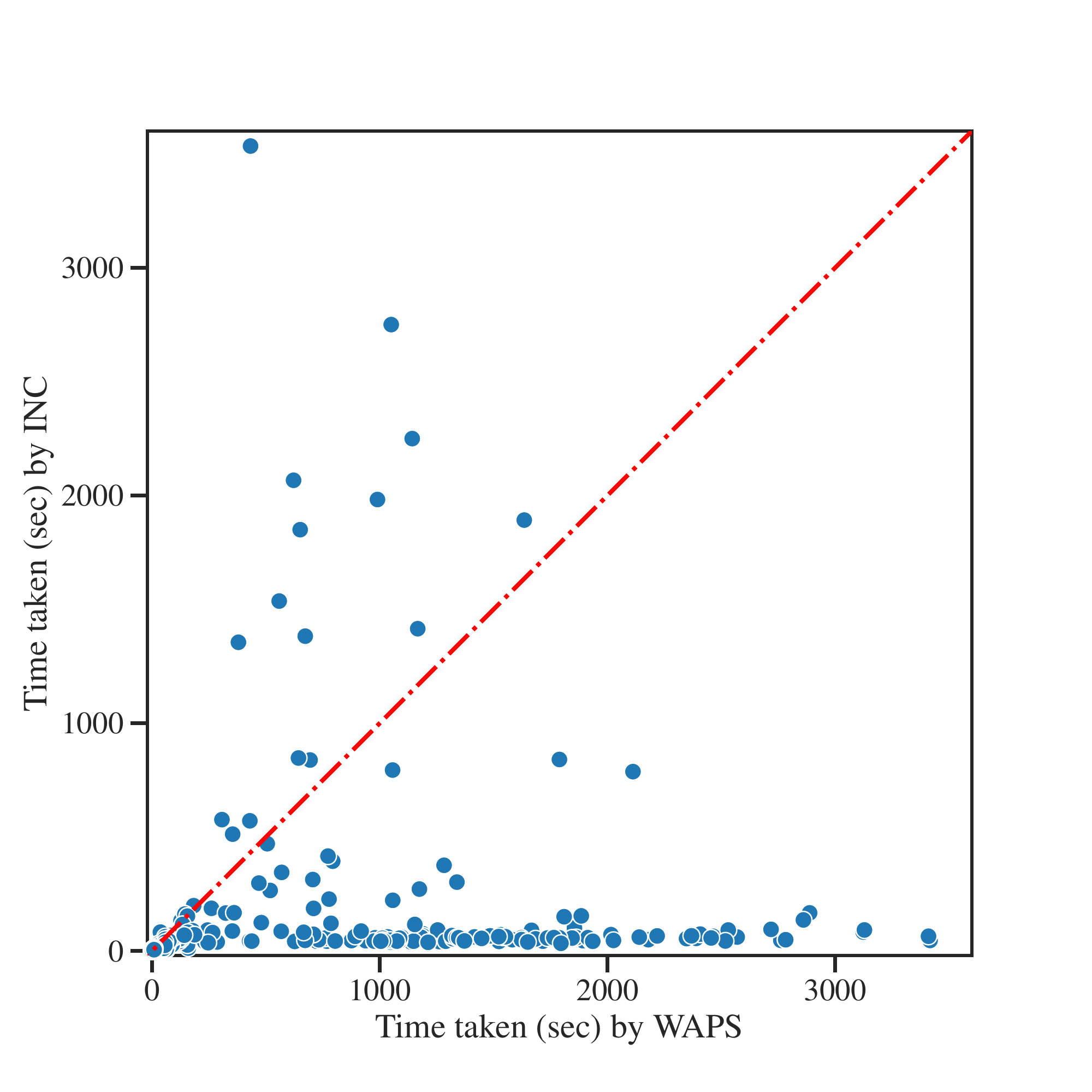}
        \caption{Incremental Runtime Scatter Plot}
        \label{fig:runtime-comparison-incremental}
    \end{subfigure}
    \caption{Runtime comparisons between {\inc} and state-of-the-art weighted sampler WAPS}
    \label{fig:runtime-comparison}
\end{figure*}

\paragraph*{\textbf{RQ 1: Incremental Sampling Performance}}

The scatter plot of incremental sampling runtime comparison is shown in Figure~\ref{fig:runtime-comparison}, with Figure~\ref{fig:runtime-comparison-single-round} showing runtime comparison for the first round (R1) and Figure~\ref{fig:runtime-comparison-incremental} showing runtime comparison over 10 rounds. The vertical axes represent the runtime of {\inc} and the horizontal axes represent that of WAPS. In the experiments, {\inc} completed 650 out of 896 benchmarks whereas WAPS completed 674. {\inc} completed 21 benchmarks that WAPS timed out and similarly, WAPS completed 45 benchmarks that {\inc} timed out. In the experiments, {\inc} achieved a median speedup of $1.69\times$ over WAPS.

\begin{table*}
    \centering
    \begin{tabular}{l|rrrrr}
        \toprule
        Statistic & 
        $\dfrac{\text{WAPS $\mathsf{MEAN}$(R2 to R10)}}{\text{WAPS R1}}$ & 
        $\dfrac{\text{{\inc} $\mathsf{MEAN}$(R2 to R10)}}{\text{{\inc} R1}}$ &
        $\dfrac{\text{WAPS R1}}{\text{{\inc} R1}}$  & 
        $\dfrac{\text{WAPS $\mathsf{SUM}$(R2 to R10)}}{\text{{\inc} $\mathsf{SUM}$(R2 to R10)}}$ & 
        $\dfrac{\text{WAPS Total}}{\text{{\inc} Total}}$
        \\
        \midrule
        Mean & 0.74 & 0.064 & 1.03 & 15.66 & 6.12 \\
        \midrule
        Std & 0.24 & 0.040 & 1.47 & 26.42 & 10.73  \\
        \midrule
        Median & 0.67 & 0.059 & 0.44 & 4.48 & 1.69  \\
        \midrule
        Max & 1.25 & 0.188 & 10.65 & 172.66 & 73.96  \\
        \bottomrule
    \end{tabular}
    \caption{Incremental weighted sampling runtime ratio statistics for WAPS and {\inc} (Numerators and denominators refer to the corresponding runtimes)}
    \label{tab:sampling-result-stats}
\end{table*}

\begin{table*}
    \centering
    \small
    \tabcolsep=0.2cm
    \begin{tabular}{ll|rrrrrrrrrrr|r}
    \toprule
    Benchmark & Tool & R1 & R2 & R3 & R4 & R5 & R6 & R7 & R8 & R9 & R10 & Total & Speed \\
    \midrule
    or-50-5-5-UC-10 & WAPS & \textbf{56.6} & 56.3 & 52.5 & 59.4 & 52.5 & 53.6 & 59.4 & 53.2 & 53.4 & 61.7 & \textbf{558.6} & 1.0$\times$ \\
    (100, 253) & {\inc} & 1461.3 & \textbf{7.6} & \textbf{8.4} & \textbf{8.4} & \textbf{8.4} & \textbf{8.4} & \textbf{8.5} & \textbf{8.5} & \textbf{8.4} & \textbf{8.5} & 1536.3 & 0.4$\times$ \\
    \midrule
    or-100-20-9-UC-30 & WAPS & \textbf{73.0} & 69.1 & 66.7 & 76.0 & 66.5 & 66.9 & 76.6 & 66.0 & 66.9 & 78.6 & 706.1 & 1.0$\times$ \\
    (200, 528) & {\inc} & 269.5 & \textbf{4.7} & \textbf{4.8} & \textbf{4.8} & \textbf{4.9} & \textbf{5.1} & \textbf{4.8} & \textbf{4.8} & \textbf{4.8} & \textbf{5.1} & \textbf{313.4} & 2.3$\times$ \\
    \midrule
    s953a\_15\_7 & WAPS & \textbf{1.7} & 1.1 & 1.1 & 1.2 & 1.0 & 1.1 & 1.2 & 1.1 & 1.1 & 1.3 & 11.9 & 1.0$\times$ \\
    (602, 1657) & {\inc} & 4.9 & \textbf{0.7} & \textbf{0.7} & \textbf{0.7} & \textbf{0.7} & \textbf{0.7} & \textbf{0.7} & \textbf{0.7} & \textbf{0.7} & \textbf{0.7} & \textbf{11.5} & 1.0$\times$ \\
    \midrule
    h8max         & WAPS & 90.3 & 104.2 & 92.4 & 116.0 & 94.3 & 94.1 & 112.9 & 92.9 & 94.4 & 120.4 & 1011.9 & 1.0$\times$           \\
    (1202, 3072)  & {\inc} & \textbf{34.1} & \textbf{2.1} & \textbf{2.2} & \textbf{2.4} & \textbf{2.3} & \textbf{2.4} & \textbf{2.2} & \textbf{2.4} & \textbf{2.4} & \textbf{2.3} & \textbf{55.7} & 18.2$\times$    \\
    \midrule
    innovator      & WAPS & 195.5 & 221.9 & 201.3 & 244.4 & 200.1 & 206.7 & 247.2 & 202.0 & 202.9 & 257.4 & 2179.3 & 1.0$\times$           \\
    (1256, 50452)  & {\inc} & \textbf{32.8} & \textbf{1.6} & \textbf{1.8} & \textbf{1.9} & \textbf{1.9} & \textbf{1.9} & \textbf{1.8} & \textbf{1.9} & \textbf{1.9} & \textbf{1.9} & \textbf{49.4} & 44.1$\times$    \\
    \bottomrule
    \end{tabular}
    \caption{Runtime (seconds) breakdowns for each of ten rounds (R1-R10) between WAPS and {\inc} for benchmarks of different sizes e.g. `h8max' benchmark consists of 1202 variables and 3072 clauses.}
    \label{tab:benchmark-instance-comparison}
\end{table*}

Further results are shown in Table~\ref{tab:sampling-result-stats}. Observe that for runtime taken for R1 (column 3), WAPS is faster and takes around $0.44\times$ of {\inc}'s runtime in the median case. However, {\inc} takes the lead in runtime performance when we examine the total time taken for the incremental rounds R2 to R10 (column 4). For incremental rounds, WAPS always took longer than {\inc}, in the median case WAPS took $4.48\times$ longer than {\inc}. We compare the average incremental round runtime with the first round runtime for both samplers in columns 1 and 2. In the median case, an incremental round for WAPS takes $67\%$ of the time for R1 whereas an incremental round for {\inc} only requires $5.9\%$ of the time R1 takes. We show the per round runtime for 5 benchmarks in Table~\ref{tab:benchmark-instance-comparison} to further illustrate {\inc}'s runtime advantage over WAPS for incremental sampling rounds, even though both tools reuse the respective KC diagram compiled in R1. This set of results highlights {\inc}'s superior performance over WAPS in the handling of incremental sampling settings. {\inc}'s advantage in incremental sampling rounds led to better overall runtime performance than WAPS in $75\%$ of evaluations. The runtime advantage of {\inc} would be more obvious in applications requiring more than 10 rounds of samples.

Therefore, we conducted sampling experiments for 20 rounds to substantiate our claims that {\inc} will have a larger runtime lead over WAPS with more rounds. Both samplers are given the same 3600s timeout as before and are to draw 100 samples per round, for 20 rounds. The number of completed benchmarks is shown in Table~\ref{tab:20round-timeout-comparison} In the 20 sampling round setting, {\inc} completed 649 out of 896 benchmarks, timing out on 1 additional benchmark compared to 10 sampling round setting. In comparison, WAPS completed 596 of 896 benchmarks, timing out on 78 additional benchmarks than in the 10 sampling round setting. In addition, WAPS takes on median $2.17\times$ longer than {\inc} under the 20 sampling round setting, an increase over the $1.69\times$ under the 10 sampling round setting.

\begin{table}
    \centering
    \small
    \begin{tabular}{l|r|r}
        \toprule
        Number of rounds & WAPS & {\inc} \\
        \midrule
        10 & 674 & 650 \\
        \midrule
        20 & 596 & 649 \\
        \bottomrule
    \end{tabular}
    \caption{Number of completed benchmarks within 3600s, for 10 and 20 round settings}
    \label{tab:20round-timeout-comparison}
\end{table}

The runtime results clearly highlight the advantage of {\inc} for incremental weighted sampling applications and that {\inc} is noticeably better at incremental sampling than the current state-of-the-art.

\paragraph*{\textbf{RQ 2: {\prob} Performance Impacts}}

\begin{table}
    \centering
    \small
    \begin{tabular}{l|r}
        \toprule
        Statistic & $\frac{\text{WAPS KC size}}{\text{{\inc} KC size}}$\\
        \midrule
        Mean & 18.92 \\
        \midrule
        Std & 81.19 \\
        \midrule
        Median & 4.64 \\
        \midrule
        Max & 1734.08 \\
        \bottomrule
    \end{tabular}
    \caption{Statistics for number of nodes in d-DNNF (WAPS KC diagram) over that of smoothed {\prob} ({\inc} KC diagram).}
    \label{tab:benchmark-kc-diagram-size}
\end{table}

We now focus on the analysis of the impact of using {\prob} compared to d-DNNF in the design of a weighted sampler. We analyzed the size of both {\prob} and d-DNNF across the benchmarks that both tools managed to compile and show the results in Table~\ref{tab:benchmark-kc-diagram-size}. From Table~\ref{tab:benchmark-kc-diagram-size}, {\prob} is always smaller than the corresponding d-DNNF. Additionally, {\prob} is at median $4.64\times$ smaller than the corresponding d-DNNF, and that for {\prob} is an order of magnitude smaller for at least $25\%$ of the benchmarks. As such, {\prob} emerges as the clear choice of knowledge compilation diagram used in {\inc}, owing to its succinctness which leads to fast incremental sampling runtimes.

\paragraph*{\textbf{RQ 3: Log-space Computation Performance Impacts}}

\begin{table}
    \centering
    \small
    \begin{tabular}{l|r}
        \toprule
        Statistic & $\frac{\text{{\incap} runtime}}{\text{{\inc} runtime}}$\\
        \midrule
        Mean & 1.14 \\
        \midrule
        Std & 0.16 \\
        \midrule
        Median & 1.12 \\
        \midrule
        Max & 1.89 \\
        \bottomrule
    \end{tabular}
    \caption{Runtime comparison of {\inc} and {\incap}}
    \label{tab:inc-implementation-runtime-comparison}
\end{table}

In the design of {\inc}, we utilized log-space computations to perform \textit{annotation} computations as opposed to naively using arbitrary precision math libraries. In order to analyze the impact of this design choice, we implemented a version of {\inc} where the dynamic \textit{annotation} computations are performed using arbitrary precision math in a similar manner as WAPS. We refer to the arbitrary precision math version of {\inc} as {\incap}. As an ablation study, we compare the runtime of both implementations across all the benchmarks and show the comparison in Table~\ref{tab:inc-implementation-runtime-comparison}. The statistics shown is for the ratio of {\incap} runtime to {\inc} runtime, a value of $1.12$ means that {\incap} takes $1.12\times$ that of {\inc} for the corresponding statistics.

The results in Table~\ref{tab:inc-implementation-runtime-comparison} highlight the runtime advantages of our decision to use log-space computations over arbitrary precision computations. {\inc} has faster runtime than {\incap} in majority of the benchmarks. {\inc} displayed a minimum of $0.70 \times$, a median of $1.12 \times$,and a max of $1.89 \times$ speedup over {\incap}. Furthermore, {\incap} timed out on 2 more benchmarks compared to {\inc}. It is worth emphasizing that log-space computations do not introduce any error, and our usage of them sought to improve on the naive usage of arbitrary precision math libraries.   

\paragraph*{\textbf{RQ 4: {\inc} Sampling Quality}}

We conducted additional evaluation to further substantiate evidence of {\inc}'s sampling correctness, apart from theoretical analysis in Section~\ref{subsec:theo-analysis}. Specifically, we compared the samples from {\inc} and WAPS, which has proven theoretical guarantees~\cite{GSRM19}, on the `case110' benchmark that is extensively used by prior works~\cite{SGRM18,AHT18,GSRM19}. We gave each positive literal weight of $0.75$ and each negative literal $0.25$, and subsequently drew one million samples using both {\inc} and WAPS and compare them in Figure~\ref{fig:distplot}.

\begin{figure}[htb]
    \begin{center}
    \centerline{\includegraphics[width=\columnwidth]{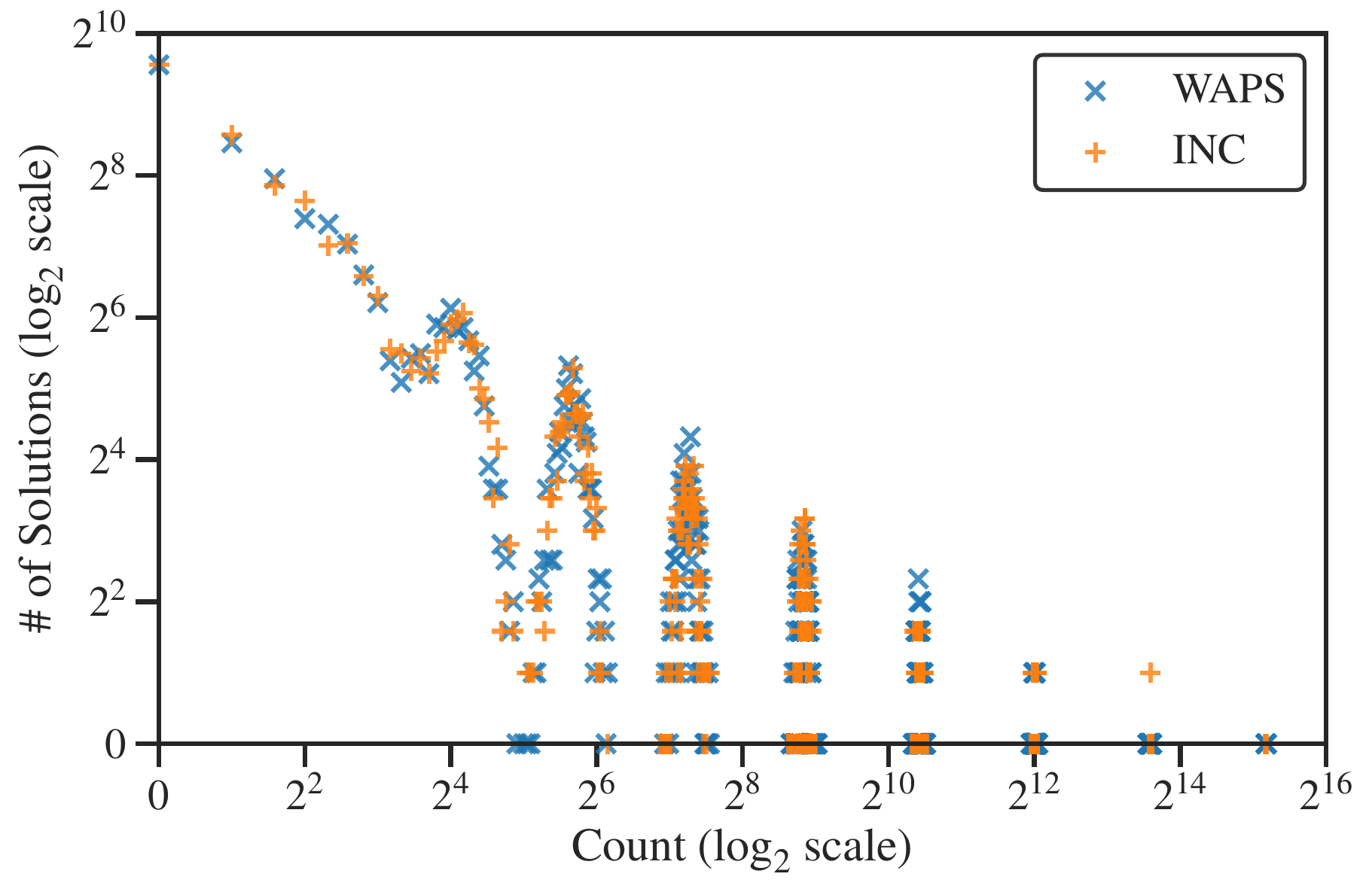}}
    \caption{Distribution comparison for Case110, with $\log$ scale for both axes}
    \label{fig:distplot}
    \end{center}
\end{figure}

Figure~\ref{fig:distplot} shows the distributions of samples drawn by {\inc} and WAPS for `case110' benchmark. A point ($x,y$) on the plot represents $y$ number of unique solutions that were sampled $x$ times in the sampling process by the respective samplers. The almost perfect match between the weighted samples drawn by {\inc} and WAPS, coupled with our theoretical analysis in Section~\ref*{subsec:theo-analysis}, substantiates our claim {\inc}'s correctness in performing weighted sampling. Additionally, it also shows that {\inc} can be a functional replacement for existing state-of-the-art sampler WAPS, given that both have theoretical guarantees.

\paragraph*{\textbf{Discussion}}

We demonstrated the runtime performance advantages of {\inc} and the two main contributing factors - a choice of succinct knowledge compilation form and dynamic log-space \textit{annotation}. {\inc} takes longer than WAPS for single-round sampling, mainly because WAPS takes less time for KC diagram compilation than {\inc}, leading to WAPS being faster in single-round sampling. In the incremental sampling setting, the compilation costs of KC diagrams are amortized, and since {\inc} is substantially better at handling incremental updates, it thus took the overall runtime lead from WAPS in the majority of the benchmarks. Extrapolating the trend, it is most likely that {\inc} would have a larger runtime lead over WAPS for applications requiring more than 10 sampling rounds. The runtime breakdown demonstrates that {\inc} is able to amortize the compilation time over the incremental sampling rounds, with subsequent rounds being much faster than WAPS.
In summary, we show that {\inc} is substantially better at incremental sampling than existing state-of-the-art. 

\section{Conclusion and Future Work} \label{sec:future-works}

In conclusion, we introduced a bottom-up weighted sampler, {\inc}, that is optimized for incremental weighted sampling. By exploiting the succinct structure of {\prob} and log-space computations, {\inc} demonstrated superior runtime performance in a series of extensive benchmarks when compared to the current state-of-the-art weighted sampler WAPS. The improved runtime performance, coupled with correctness guarantees, makes a strong case for the wide adoption of {\inc} in future applications. 

For future work, a natural step would be to seek further runtime improvements for {\prob} compilation since {\inc} takes longer than SOTA for the initial sampling round, due to slower compilation.  Another extension would be to investigate the design of a partial \textit{annotation} algorithm to reduce computations when only a small portion of the weights have been updated. It would also be of interest if we could store partial sampled assignments at each node as a succinct sketch to reduce memory footprint, for instance we could store each unique assignment and its count.

\section*{Acknowledgement}

We sincerely thank Yong Lai for the insightful discussions. Suwei Yang is supported by the Grab-NUS AI Lab, a joint collaboration between GrabTaxi Holdings Pte. Ltd., National University of Singapore, and the Industrial Postgraduate Program (Grant: S18-1198-IPP-II) funded by the Economic Development Board of Singapore. Kuldeep S. Meel is supported in part by National Research Foundation Singapore under its NRF Fellowship Programme (NRF-NRFFAI1-2019-0004), Ministry of Education Singapore Tier 2 grant (MOE-T2EP20121-0011), and Ministry of Education Singapore Tier 1 Grant (R-252-000-B59-114).

\bibliographystyle{ieeetr}
\bibliography{references}

\end{document}